\DeclareMathOperator{\KA}{\mathit{KM}}
\DeclareMathOperator{\KP}{\mathit{K}}
\DeclareMathOperator{\m}{\mathbf{m}}
\DeclareMathOperator{\A}{\mathbf{a}}
\DeclareMathOperator{\T}{\mathbf{t}}
\newtheorem{theorem}{Theorem}
\newtheorem{proposition}[theorem]{Proposition}
\newtheorem{lemma}[theorem]{Lemma}
\title{The sum $2^{\KA(x)-\KP(x)}$ over all prefixes $x$ of some binary sequence can be infinite}
\author{Mikhail Andreev, Akim Kumok}
\begin{document}
\maketitle

\begin{abstract}
We consider two quantities that measure complexity of binary strings: $\KA(x)$ is defined as the minus logarithm of continuous a priori probability on the binary tree, and $\KP(x)$ denotes prefix complexity of a binary string $x$. In this paper we answer a question posed by Joseph Miller and prove that there exists an infinite binary sequence $\omega$ such that the sum of $2^{\KA(x)-\KP(x)}$ over all prefixes $x$ of $\omega$ is infinite. Such a sequence can be chosen among characteristic sequences of computably enumerable sets.
\end{abstract}

\section{Introduction}

Algorithmic information theory tries to define the notion of \emph{complexity} of a finite object and the related notion of its \emph{a priori probability}. Both notions have different versions, and many of these versions can be used to define algorithmic randomness. To explain why the result of this paper could be interesting, let us start with a short survey of these notions and related results; for the detailed exposition of the related definitions and results see, e.g.,~\cite{kolmbook,ShenNotes}.

A notion of prefix complexity was introduced by Levin (see \cite{LevinThesis,LevinPpi,GacsSymmetry} and later by Chaitin~\cite{ChaitinPrefix} (in different forms). Let $D$ be a computable function whose arguments and values are binary strings. This function is called \emph{prefix-free} if its domain is prefix-free, i.e., does not contain both a string and its non-trivial prefix. Define $\KP_D(x)$ the minimal length of $p$ such that $D(p)=x$. Among all functions $\KP_D$ for all computable prefix-free $D$ there exists a mininal one (up to $O(1)$ additive term); one of them is fixed and called $\KP(x)$, the prefix-free complexity of $x$. (Another version, which gives the same function $\KP$ with $O(1)$-precision, uses \emph{prefix-stable} functions $D$: this means that if $D(x)$ is defined, then $D(xz)$ is defined and equals $D(x)$ for all $z$).

The prefix complexity is closely related with the \emph{discrete a priori probability}~\cite{LevinThesis,LevinPpi,ChaitinPrefix}. Consider a non-negative total real function $m$ defined on binary strings. We call $m$ a \emph{discrete semimeasure} if $\sum_x m(x)\le 1$. We say also that $m$ is \emph{lower semicomputable} if $m(x)$ can be represented as a limit of a non-decreasing sequence $M(x,0),M(x,1),\ldots$ where $M$ is a non-negative total function of two arguments with rational values. Levin  introduced this notion and showed that there exist a maximal (up to $O(1)$-factor) lower semicomputable semimeasure, and this semimeasure is equal to $2^{-\KP(x)+O(1)}$. We fix some maximal lower semicomputable semimeasure, call it the \emph{discrete a priori probability} (see below about the continuous a priori probability), and denote it in the sequel by~$\mathbf{m}(x)$.

Discrete lower semicomputable semimeasures are exactly the output distributions of probabilistic machines without input that produce their output at once (say, write a binary string and then terminate). We can also consider probabilistic machines without input that produce their output bit by bit (and never terminate explicitly, though it may happen that they produce only finitely many output bits). The output distributions of such machines are described by lower semicomputable \emph{continuous semimeasures} (=semimeasures on a binary tree), introduced in~\cite{ZvonkinLevin}.  By a continuous semimeasure we mean a non-negative total function $a$ that is defined on binary strings and has the following two properties:
\begin{itemize}
\item $a(\Lambda)=1$, where $\Lambda$ is an empty string;
\item $a(x)\ge a(x0)+a(x1)$ for every string $x$.
\end{itemize}
There exists a maximal (up to $O(1)$-factor) lower semicomputable continuous semimeasure; it is called the \emph{continuous a priori probability} and is denoted by $\A(x)$ in the sequel. The quantity $-\log_2 \A(x)$ is ofter called \emph{a priori complexity} and sometimes denoted $\KA(x)$.

Now we have defined all the quantities involved in our main result, but to explain its informal meaning we should say more about algorithmic randomness. (These explanations are not needed to understand the statement and the proof of the main result, so the reader may jump to the next section.)

The notion of a random sequence was introduced by Martin-L\"of in 1966 (see~\cite{MartinLofDefinition}). Let $P$ be a computable measure on the Cantor space $\Omega=\{0,1\}^\infty$ of infinite binary sequences; this means that the values $P(x\Omega)$ of the cylinders (here $x\Omega$ is the set of all infinite extensions of a binary string $x$) can be effectively computed with arbitrary precision. An \emph{effectively open} subset of $\Omega$ is a union of a (computably) enumerable set of cylinders. A \emph{Martin-L\"of test} (with respect to $P$) is an uniformly enumerable decreasing sequence of effectively open sets 
    $$ U_1\supset U_2 \supset U_3\supset\ldots $$
such that $P(U_i)\le 2^{-i}$. A sequence $\omega\in\Omega$ \emph{passes} this test if it does not belong to $\bigcap_i U_i$. \emph{Martin-L\"of random sequences} are sequences that pass all tests (with respect to $P$).

In 1970s Levin and Gacs found an useful reformulation of this definition in terms of \emph{randomness deficiency} function. Consider a lower semicomputable function $t$ on the Cantor space with non-negative real values (possible infinite). Lower semicomputability means that this the set $\{\omega|t(\omega)> r\}$ is effectively open for all positive rational $r$ uniformly in $r$. A \emph{Levin--Gacs test} with respect to $P$ is such a function with finite integral $\int t(\omega)\,dP(\omega)$. For a given $P$ there is a maximal (up to $O(1)$-factor) Levin--Gacs test; Martin-L\"of random sequnces are exactly the sequences for which this test is finite.

There is a formula\footnote{It goes back to P.~G\'acs paper~\cite{GacsExactExpression}, but G\'acs used a different and rather cumbersome notation there. See~\cite{longpaper} for the detailed exposition.} that expresses a maximal test in terms of a priori probability: 
  $$\T(\omega)=\sum_{x\sqsubset \omega}\frac{\m(x)}{P(x)};$$
here $x\sqsubset\omega$ means that binary string $x$ is a prefix of an infinite binary sequence $\omega$; note that $\T$ in the left-hand side depends on $P$ (though this is not reflected in the notation). Moreover, the sum in this formula can be replaced by the supremum.

For the uniform Lebesgue measure on the Cantor space this result can be rewritten as follows:
  $$\T(\omega)=\sum_n 2^{n-\KP(\omega_1\ldots\omega_n)}=2^{\sup_n( n-\KP(\omega_1\ldots\omega_n))}.$$
This equation implies both Schnorr--Levin criterion of randomness (see~\cite{SchnorrCrit,LevinCrit}; its version with prefix complexity saying $\omega$ is Martin-L\"of random with respect to the uniform measure iff $n-\KP(\omega_1\ldots\omega_n)$ is bounded, is mentioned in~\cite{ChaitinPrefix}) and the Miller--Yu \emph{ample excess} lemma (\cite{MillerYu}, section~2) saying that the sum in the right hand side is finite for random $\omega$. 

There were many attempts to generalize a notion of randomness to a broader class of distributions, not only computable measures. The notion of uniform test (a function of two arguments: a sequence and a measure) was introduced by Levin (see~\cite{LevinCrit,LevinUniform}); it was used to define \emph{uniform randomness} with respect to arbitrary (not necessarily computable) measure $P$. Levin proved that there exists a \emph{neutral measure} $N$ such that every sequence is uniformly random with respect to $N$ (and even has uniform randomness deficiency at most $1$), see~\cite{longpaper} for the exposition of these results.

One could also try to extend the definition to continuous lower semicomputable \emph{semi}measures (a broader class than computable  measures where $a(x)=a(x0)+a(x1)$). Such a semimeasure is an output distribution of a probabilistic machine and one may ask which sequences are ``plausible outcomes'' for such a machine. In this case there is no universally accepted definition; one of the desirable properties of such a definition is that every sequence should be random with respect to continuous a priori probability $\A(\cdot)$ (that corresponds to a probabilistic machine for which we do not have any a priori information).

One of the possibilities would be to use Gacs' formula as a definition and say that a sequence $\omega$ is random with respect to a continuous semimeasure $A$ if the sum $\sum_{x\sqsubset\omega} \m(x)/A(x)$ is finite, or if the supremum $\sup_{x\sqsubset\omega} \m(x)/A(x)$ is finite.  If $A$ is the continuous a priori probability, the supremum is always finite (and uniformly bounded: it is easy to see that $\m(x)/\A(x)\le O(1)$ for all $x$). Moreover, in 2010 Lempp, Miller, Ng and Turetsky (unpublished; we thank J.~Miller who kindly provided a copy of this note) have shown that for every $\omega$ the ratio $\m(x)/\A(x)$ tends to zero for prefixes $x\sqsubset\omega$ (though it is $\Theta(1)$, say, for strings of the form $0^n1$).

In this paper we show (Theorem~\ref{th:main} in Section~\ref{sec:main}) that this result cannot be strengthened to show that the sum of $\m(x)/\A(x)$ along every sequence is bounded. So the first of the suggested definitions of randomness with respect to semimeasure (with the sum instead of supremum) differs from the second one: not all sequences are random with respect to $\A$, according to this definition.

It would be interesting to understand better for which sequences the sum $$\sum_{x\sqsubset\omega}\m(x)/\A(x)$$ is finite. Are they related somehow to K-trivial sequences (where $\m(x)$ is equal to $\m(|x|)$ up to $O(1)$-factor)? We do not know the answer; we can show only (see Section~\ref{sec:enumerable}) that one can find a computably enumerable set whose characteristic sequence has this property.

Our result about the sum of $\m(x)/\A(x)$ is of computational nature: if we allow more computational power for $\A(x)$, the sum becomes finite, as the following simple proposition shows.

\begin{proposition}\label{prop:offline}
Let $\A'=\A^{\mathbf{0}'}$ be the relativized continuous a priori probability using $\mathbf{0}'$ as an oracle. Then the sum
$\sum_{x\sqsubset\omega}{\frac{\m(x)}{\A'(x)}}$ is bounded for all $\omega$ by a constant (not depending on $\omega$).
\end{proposition}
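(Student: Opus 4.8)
The plan is to reduce the statement to the construction of one explicit continuous semimeasure. Since $\A'$ is maximal among continuous semimeasures that are lower semicomputable relative to $\mathbf{0}'$, it suffices to produce a single such semimeasure $A$ satisfying $\sum_{x\sqsubset\omega}\m(x)/A(x)\le 1$ for every $\omega$; maximality then gives $\A'\ge c\,A$ for some constant $c>0$, and the Proposition follows with the constant $1/c$. The object driving the construction is the total future $\m$-mass $w(x)=\sum_{y\sqsupseteq x}\m(y)$, which satisfies $w(\Lambda)=\sum_y\m(y)\le 1$ and the identity $w(x)=\m(x)+w(x0)+w(x1)$, hence is itself a continuous semimeasure and is non-increasing along every path.

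The naive choice $A=w$ does not work: along a path where $w$ decays like $1/n$ the terms $\m(x)/w(x)$ behave like $1/n$ and the sum diverges. Instead I would spread the available measure toward where the future $\m$-mass actually lies, defining $A$ top-down by $A(\Lambda)=1$ and splitting each node's measure in proportion to the children's future masses: $A(xb)=A(x)\,w(xb)/\big(w(x0)+w(x1)\big)$ for $b\in\{0,1\}$ (and $A(xb)=0$ when $w(x0)+w(x1)=0$). This $A$ is a genuine measure ($A(x0)+A(x1)=A(x)$, $A(\Lambda)=1$). The one place where the oracle is essential is turning $w$ into actual split ratios: the reals $w(x)$ are only lower semicomputable, but every lower semicomputable real is computable relative to $\mathbf{0}'$ (binary search using the $\Sigma_1$ queries ``$w(x)>q$'', each decided by $\mathbf{0}'$), so $A(x)$, a finite product of such ratios, is computable, hence lower semicomputable, relative to $\mathbf{0}'$.

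The payoff is an exact telescoping. For either child $b$ the proportional split gives $w(xb)/A(xb)=\big(w(x0)+w(x1)\big)/A(x)$, so $\m(x)/A(x)+w(xb)/A(xb)=w(x)/A(x)$; that is, the density $\rho(x)=w(x)/A(x)$ takes the same value on both children and drops along the path by exactly the summand, $\rho(x)=\m(x)/A(x)+\rho(xb)$. Telescoping along the prefixes of $\omega$ yields $\sum_{x\sqsubset\omega}\m(x)/A(x)=\rho(\Lambda)-\lim_N\rho(\omega_1\ldots\omega_N)\le\rho(\Lambda)=w(\Lambda)\le 1$, uniformly in $\omega$. The main obstacle is precisely the design of $A$: recognizing that $A=w$ fails and that the proportional split produces the clean telescoping identity above; the only other points needing care are the $\mathbf{0}'$-computability of the ratios and checking that no division by zero occurs, which follows because $w$ is non-increasing along paths, so positivity of $A$ propagates from the root to every node carrying positive $\m$-mass.
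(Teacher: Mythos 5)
Your proposal is correct and matches the paper's proof in its essential content: the paper constructs exactly the same $\mathbf{0}'$-computable measure (your $w(x)$ is its $M_x$, and the proportional split $A(xb)=A(x)\,w(xb)/(w(x0)+w(x1))$ is its splitting rule), then invokes maximality of $\A'$. The only cosmetic difference is verification: where you telescope the density $\rho=w/A$ along a path, the paper runs an induction on finite truncations of the tree, which amounts to the same identity.
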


\begin{proof}
It is enough to construct a $\mathbf{0}'$-computable \emph{measure} $a'$ such that $\sum_{x\sqsubset\omega}{\frac{\m(x)}{a'(x)}}\le 1$ for all $\omega$. (Then we can note that $\A'(x)$ is an upper bound for $a'$.) One can describe such a measure explicitly. Let us add all the a priori probabilities of all strings $x$ that start with $0$ and with $1$:
$$
     M_0=\sum_u \m(0u); \quad M_1=\sum_u \m(1u).
$$
(Note that $M_0+M_1+\m(\Lambda)\le 1$, where $\Lambda$ denotes the empty string, the root of the tree.) Now let us split $1$ into $a'(0)+a'(1)$ in the same proportion, i.e., let
$$
a'(0)=\frac{M_0}{M_0+M_1},\quad a'(1)=\frac{M_1}{M_0+M_1}.
$$
Then we continue in the same way, splitting $a'(0)$ into $a'(00)$ and $a'(01)$ in the proportion $M_{00}:M_{01}$, and so on. Here $M_z$, defined for every string $z$, denotes the sum $\sum_u \m(zu)$.

The numbers $M_z$ are lower semicomputable, so they are $\mathbf{0}'$-computable (and positive), and the measure $a'$ is well defined and $\mathbf{0}'$-computable. It remains to check that it is large enough, so the sum in question is bounded by $1$.

It is enough to prove this bound for finite sums (when only vertices below some level $N$ are considered), so we can argue by induction and assume that the similar statement is true for the left and right subtrees of the root,  with appropriate scaling. \footnote{The summation is stopped at the same level $N$, so the tree height is less by $1$ and we can apply the induction assumption. The base of induction is trivial: in the root the ratio $m/a$ is at most $1$ for evident reasons.} The sum of $m(x)$ in the left subtree is bounded by (actually, is equal to) $M_0$, instead of $1$ in the entire tree; the sum in the right subtree in bounded by $M_1$. On the other hand, the values of $a'$ at the roots of these trees, i.e., $a'(0)$ and $a'(1)$, are also smaller. So the induction assumption says that for each path in the left subtree the sum of $\m(x)/a'(x)$ is bounded by $M_0/a'(0)$, and for each path in the right subtree the sum is bounded by $M_1/a'(1)$. Therefore, it remains to show that
  $$
\frac{M_0}{a'(0)}+\m(\Lambda) \le 1, \qquad 
\frac{M_1}{a'(1)}+\m(\Lambda) \le 1. 
  $$

Recall that we defined $a'(0)$ and $a'(1)$ in such a way that they are proportional to $M_0$ and $M_1$ respectively, and the sum $a'(0)+a'(1)=1$. So the both fractions in the last formula are equal to $M_0+M_1$, and it remains to note that $M_0+M_1+\m(\Lambda)$ is the sum of $\m(x)$ over all strings $x$ and is bounded by $1$.

\end{proof}

\textbf{Remark}. Laurent Bienvenu noted that this (simple) computation can be replaced by references to some known facts and techniques. Namely, we know that there exists a neutral measure $N$ such that every binary sequence $\omega$ has uniform deficiency at most $1$ with respect to $N$. This deficiency can be rewritten as $\sum_{x\sqsubset\omega} \frac{\m(x|N)}{N(x)}$ (see~\cite{longpaper} for details). Using low-basis argument, we can choose a $\mathbf{0}'$-computable neutral measure $N$; then  $\A'$ is greater that this $N$. And (in any case) $\m(x|N)$ is greater than $\m(x)$, so we get a desired result.

\section{Main result and the proof sketch: the game argument}
\label{sec:main}

\begin{theorem}\label{th:main}
There exists an infinite binary sequence $\omega$ such that 
   $$
\sum_{x\sqsubset\omega} \frac{\m(x)}{\A(x)}  =\infty.
   $$
\end{theorem}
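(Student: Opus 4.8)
The plan is to recast the problem as a two-player game on the full binary tree and to exhibit a winning strategy for the player who builds the sequence. First I would invoke the maximality of $\m$ to reduce the target to the following: it suffices to construct a lower semicomputable discrete semimeasure $m'$ with $\sum_x m'(x)\le 1$ together with an infinite sequence $\omega$ such that $\sum_{x\sqsubset\omega} m'(x)/\A(x)=\infty$, because then $\m(x)\ge c\,m'(x)$ for some fixed $c>0$ gives $\sum_{x\sqsubset\omega}\m(x)/\A(x)=\infty$ at once. I would then view the construction as a game between \emph{Builder}, who enumerates the weights $m'(x)$ from below and gradually commits longer and longer prefixes of $\omega$, and an \emph{Adversary}, who enumerates $\A$ from below. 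Builder controls the numerators and the routing of the path, while the Adversary controls the denominators subject to the single constraint that $\A$ be a continuous semimeasure, so that $\sum_{y\in F}\A(y)\le 1$ for every antichain $F$. This \emph{continuity budget} is the only leverage Builder has.

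Builder's strategy would proceed in phases $k=1,2,\dots$, spending an $m'$-budget $\beta_k$ with $\sum_k\beta_k\le 1$ (say $\beta_k=2^{-k}$), and aiming in each phase to increase the partial sum $\sum_{x\sqsubset\omega}m'(x)/\A(x)$ by at least a fixed constant $c_0$; since there are infinitely many phases, the total sum then diverges. The engine of a single phase exploits the continuity budget directly: at a sufficiently deep level almost all nodes must carry a tiny $\A$-value, since the total over the level is at most $1$ shared among $2^n$ nodes. So Builder descends from the current endpoint of $\omega$ into a region where $\A$ is small, deposits its budget $\beta_k$ there, and routes the path through cheap nodes. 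To stop the Adversary from simply pouring $\A$-mass onto whatever single node Builder picks, Builder hedges across a wide antichain of candidate cheap nodes; the Adversary's continuity budget lets it push $\A$ above a threshold on only boundedly many candidates, so one survives and Builder commits the route there. The gain per phase is a fixed constant while the individual ratios remain tiny, which is exactly the harmonic-type behaviour one is forced into, given that $\m(x)/\A(x)=O(1)$ and, by the Lempp--Miller--Ng--Turetsky result quoted above, even tends to $0$ along every $\omega$.

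The transfer back to the genuine objects is then cheap: Builder's strategy is computable relative to the (computable) enumeration of $\A$, the Adversary is just that fixed enumeration, and its total growth is bounded because $\A$ is a semimeasure, so only the maximality of $\m$ is needed to certify $\m\ge c\,m'$; no recursion theorem is required, since $\A$ does not depend on $m'$. The resulting $\omega$ need not be computable, because Builder may have to revise its route when the Adversary defends a candidate, and this limit-computable behaviour is already in the spirit of the c.e.\ refinement of Section~\ref{sec:enumerable}. The hard part is precisely the single-phase lemma together with its amortization. Because $\A$ is maximal, depositing $m'$-mass at a node forces $\A$ up there (so every node contributes only $O(1)$), and, worse, the Adversary can lower-semicomputably inflate $\A$ on any region Builder favours. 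The crux is therefore a quantitative argument that the continuity budget nonetheless lets Builder win all but finitely many phases: one must charge the Adversary's defensive $\A$-mass against its globally bounded budget of $1$ and get the combinatorics right---how wide the candidate antichain must be relative to the remaining budget, and how to spread $\beta_k$ so that the surviving route really contributes $\ge c_0$ instead of being punished after Builder commits.
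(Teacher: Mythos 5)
Your outer framing---reduce via maximality of $\m$ to building a lower semicomputable $m'$, view the construction as a game against a blind adversary enumerating $\A$, and organize Builder's play into phases with budgets $\beta_k$ summing to $1$---is exactly the paper's setup (its Sections 2 and 3). But the single-phase engine, which you yourself call the crux, is not merely left open: the mechanism you sketch is based on a miscount of the Adversary's defense cost. If you spread $\beta_k$ over an antichain of $W$ candidates below the current endpoint, each candidate carries mass $\beta_k/W$, so a candidate can contribute the constant $c_0$ only if its limit $\A$-value is at most $t=\beta_k/(Wc_0)$. Defending a candidate (pushing $\A$ just above $t$) therefore costs the Adversary only $t$, not a fixed amount; defending the \emph{entire} antichain costs $Wt=\beta_k/c_0$, independent of $W$, and over all phases $\sum_k \beta_k/c_0\le 1/c_0$, comfortably inside the Adversary's budget when $c_0\ge 2$. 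So ``only boundedly many candidates can be defended'' is false: the defendable number scales linearly with $W$, and widening the antichain buys nothing. Concretely, the following legal Adversary defeats your Builder: put a floor of $(\beta_k/W)/c_0$ under every candidate you ever touch (total cost $\le 1/c_0$), and route flow $1/2$ down whatever route you commit to (flow along a single branch is never consumed, since $a(x0)=a(x)$, $a(x1)=0$ is allowed). Then every node on your committed path contributes at most $2\beta_k/W$, any retroactive re-routing through an undefended-looking candidate yields at most $c_0$ once with nothing placed below it, and your total sum stays bounded by roughly $c_0+2$. This is the online analogue of the paper's Proposition~\ref{prop:offline}: proportional splitting defeats any strategy whose placements are effectively one-shot, and your per-phase gain degenerates to $O(\beta_k)$, summing to $O(1)$.

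The point your plan is missing is \emph{where} the unbounded leverage actually comes from, and it is not from hedging over cheap regions (cheapness cannot be locked in: $\A$ may be raised later along any route you commit to, constrained only by the global budget and by flow already irrevocably committed elsewhere). The paper's Lemma~\ref{lem:finite} gets leverage from a recursive amplification with a \emph{threat}: assuming a strategy that achieves sum $k$ per unit weight, Builder places a small fixed weight $\varepsilon$ at vertex $0$, runs $n$ scaled $k$-strategies \emph{sequentially} in subtrees $z_1,\dots,z_n$ of the left subtree, and holds her unspent reserve as a standing threat to run a $k$-strategy in the sibling vertex $1$. The threat forces the Adversary to keep its flow into vertex $0$ below explicit thresholds $d_i<1$, so the single weight $\varepsilon$ is counted with coefficient $1/d_i$ in every one of the $n$ rounds; the Riemann-sum/integral computation in Section~\ref{sec:finite-win} shows the Adversary would need total flow $\bigl(k(1-\varepsilon)+\varepsilon\log(1/\varepsilon)\bigr)/(k+\varepsilon)>1$ to survive, and the $\varepsilon\log(1/\varepsilon)$ term (strictly bigger than the linear term for small $\varepsilon$) is precisely what yields a strategy for $k+\varepsilon$ and, by iteration, for arbitrary $k$. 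Nothing in your proposal produces an effect of this kind, and without such an amplification the ``constant gain per phase from a geometric budget'' plan cannot be realized.
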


This is the main result of the paper. The proof uses (now quite standard) game technique. In this section we describe some infinite game and show how the main result follows from the existence of a computable winning strategy for one of the players (called Mathematician, or \textbf{M}) in this game. Then, in Section~\ref{sec:finite-use} we reduce this game to a finite game (more precisely, to a class of finite games), and show that if all these games uniformly have a computable winning strategy for \textbf{M}, then the infinite game has a computable winning strategy. Finally, in Section~\ref{sec:finite-win} we construct (inductively) winning strategies for finite games. (This will be the most technical part of the proof: we even need to compute some integral!)

Let us describe an infinite game with full information between two players, the Mathematician (\textbf{M}) and the Adversary (\textbf{A}). This game is played on an infinite binary tree. 

Mathematician assigns some non-negative rational weights to the tree vertices (=binary strings). Initially all the weights are zeros; at each move \textbf{M} can increase finitely many weights but cannot decrease any of them. The total weight used by \textbf{M} (the sum of her weights) should never exceed $1$. (We may assume that \textbf{M} loses the game immediately if her weights become too big.)  The current \textbf{M}'s weight of some vertex $x$ will be denoted by $m(x)$, so the requirement says that $\sum_x m(x)\le 1$ at any moment of the game (otherwise \textbf{M} loses immediately).

Adversary also assigns increasing non-negative rational weights to the tree vertices. Initially all they are zeros, except for the root weight which is $1$. But the condition is different: for every vertex $x$ the inequality $a(x0)+a(x1)\le a(x)$ should be true.  Informally, one can interpret $a(x)$ as a (pre)flow that comes to vertex $x$. The flow $1$ arrives to the root. From the root some parts $a(0)$ and $a(1)$ are shipped to the left and right sons of the root (while the remaining part $1-a(0)-a(1)$ is reserved for future use. At the next level, e.g., in the vertex $0$, the incoming flow $a(0)$ is split into $a(00)$, $a(01)$ and the (non-negative) reserve $a(0)-a(00)-a(01)$, and so on. As the time goes, the incoming flow (from the father) increases, and it can be used to increase the outgoing flow (to the sons) or kept as a reserve. Again, if \textbf{A} violates the restriction (the inequality $a(x0)+a(x1)\le a(x)$), she loses immediately.

One may assume that the players alternate, though it is not really important: the outcome of the (infinite) game is determined by the limit situation, and postponing some move never hurts (and even can simplify the player's task, since more information about the opponent's moves is then available). We say that \textbf{M} wins if there exist a branch in the tree, an infinite binary sequence $\omega$, such that 
    $$
\sum_{x\sqsubset \omega} \frac{m(x)}{a(x)}=\infty,
    $$
where $m(x)$ and $a(x)$ are limit values of the \textbf{M}'s and \textbf{A}'s weights respectively. One should agree also what happens if some values are zeros. It is not really important since each of the players can easily make her weights positive. However, it is convenient to assume that $m/0=\infty$ for $m\ne 0$ and $0/0=0$.

Now the game is fully defined. Since all the moves are finite objects, one can speak about computable strategies. The following lemma is the main step in the proof of Theorem~\ref{th:main}.

\begin{lemma}\label{lem:infinite}
\textbf{M} has a computable winning strategy in this game.
\end{lemma}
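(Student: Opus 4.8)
The plan is to design a computable strategy for Mathematician that forces the ratio sum to diverge along some branch. The core intuition is that Adversary faces a genuine conservation constraint: the total flow entering the tree is only $1$, so if \textbf{M} repeatedly places weight on vertices and waits, \textbf{A} cannot keep $a(x)$ large at every vertex where \textbf{M} has invested mass. Whenever \textbf{A} keeps some $a(x)$ small (comparable to $m(x)$), the local contribution $m(x)/a(x)$ to the sum is of order $1$; and whenever \textbf{A} keeps $a(x)$ large, she is spending her limited flow budget. The strategy should exploit this tension: force \textbf{A} either to concede large local ratios, or to commit flow that can then be ``trapped'' so that it cannot be reused deeper in the tree.

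First I would set up the strategy recursively on a sequence of levels (or epochs). At each epoch \textbf{M} works within a current finite subtree rooted at some vertex $r$ where she has already guaranteed a certain partial sum $S$ and where she knows a lower bound on the incoming flow $a(r)$. \textbf{M} distributes a small budget of her own weight among the two children (and their descendants), placing weight $m$ on candidate vertices and then \emph{waiting} to see how \textbf{A} responds. Because \textbf{A}'s weights are monotone and her total outgoing flow from $r$ is bounded by $a(r)$, she must eventually settle on how to split the flow; \textbf{M} then selects the child into which \textbf{A} has committed the larger share of flow (or, symmetrically, follows the child where the ratio is most favorable) and recurses there. The key quantitative point is to choose the per-epoch weight increments so that in each epoch \textbf{M} adds a fixed positive amount (say a constant $c>0$) to the running ratio sum along the chosen branch, while spending a geometrically decreasing amount of her own total-weight budget so that the constraint $\sum_x m(x)\le 1$ is never violated over infinitely many epochs.

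The heart of the argument is a one-step (or one-epoch) estimate: within the current subtree, \textbf{M} can force a contribution bounded below by a positive constant to $\sum_{x\sqsubset\omega} m(x)/a(x)$ while consuming only a bounded amount of \textbf{A}'s incoming flow and only a small amount of \textbf{M}'s own budget. The mechanism is that \textbf{A} must either leave $a(x)$ small on a vertex carrying \textbf{M}'s weight (yielding a large ratio immediately) or raise $a(x)$ and thereby deplete the flow available for future epochs. Iterating, after $k$ epochs the guaranteed partial sum grows without bound (like $ck$), while \textbf{M}'s total expended weight stays below $1$; taking the branch along which \textbf{M} has been recursing gives an infinite $\omega$ with divergent ratio sum. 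Since each epoch involves only finitely many rational weight assignments and a computable decision about where to recurse (and since \textbf{A}'s moves are observable), the entire strategy is computable.

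The main obstacle I anticipate is making the per-epoch estimate precise and robust: I must show that \textbf{A} genuinely cannot escape the dilemma within a finite, \textbf{M}-controllable amount of time, given that \textbf{A} may delay her moves and may choose not to commit flow at all. Handling the degenerate responses (\textbf{A} assigning zero or near-zero flow to a whole subtree) requires the convention $m/0=\infty$ and a careful argument that such responses are immediately favorable to \textbf{M}. Controlling the interaction between the flow budget consumed per epoch and the guaranteed ratio gain — so that the process can be repeated infinitely often without \textbf{A} ever regaining a large reservoir of flow downstream — is the delicate bookkeeping, and I expect this is exactly why the authors reduce the infinite game to finite games in Section~\ref{sec:finite-use} and carry out an explicit inductive construction (including an integral computation) in Section~\ref{sec:finite-win}.
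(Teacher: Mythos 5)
Your overall architecture matches the paper's: run scaled finite-subtree games along a branch with quotas $\mu_1,\mu_2,\ldots$ summing to at most $1$, gain a fixed amount (say $1$) of ratio sum in each subtree, and concatenate the winning paths. That is exactly the reduction of Section~\ref{sec:finite-use}. However, your proof has a genuine gap at its self-declared ``heart'': the one-epoch estimate --- that with her own budget $\mu$ and against incoming flow at most $1$, \textbf{M} can force a contribution bounded below by a fixed constant --- is asserted, not proved. By the scaling remark, this estimate is \emph{equivalent} to winning the finite game with parameter $k=c/\mu$, which tends to infinity as the budget shrinks; that is, your per-epoch claim \emph{is} Lemma~\ref{lem:finite} for arbitrary $k$, which is the entire technical content of the result. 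The dilemma you offer in its place (``either \textbf{A} leaves $a(x)$ small, yielding a large ratio, or she raises $a(x)$ and depletes her flow'') does not suffice: Proposition~\ref{prop:offline} shows that an adversary who splits flow proportionally to the total weight in each subtree keeps $\sum_{x\sqsubset\omega} m(x)/a(x)\le 1$ along \emph{every} branch, no matter how \textbf{M} places her weights. So a single place-and-wait move, or any sequence of them that \textbf{A} can anticipate, extracts no constant gain per vanishing budget; ``flow spent'' on a vertex is not wasted from \textbf{A}'s point of view, since it remains available to all descendants of that vertex. The gain factor must instead be built up adaptively, and making it arbitrarily large is what requires the paper's induction $k\mapsto k+\varepsilon$ with $\varepsilon\ge f(k)>0$: the threat vertex $1$ held in reserve, the $n$ sequential scaled $k$-games at $z_1,\ldots,z_n$ with thresholds $d_i$ defined by $(*)$ and $(**)$, and the verification via the Riemann sum converging to an integral exceeding $1+\varepsilon$. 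None of this (nor any substitute for it) appears in your argument; you explicitly defer it to the paper.

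A secondary, related problem is that your recursion rule is not pinned down: you say \textbf{M} follows ``the child into which \textbf{A} has committed the larger share of flow (or, symmetrically, the child where the ratio is most favorable),'' but these are opposite choices, and neither works by itself. Following the large-flow child keeps \textbf{A}'s committed flow fully usable downstream; following the favorable-ratio child gains a ratio only if you can quantify how small \textbf{A} was forced to make that flow, which again requires threshold bookkeeping of exactly the kind carried out in Section~\ref{sec:finite-win}. In short, your proposal correctly reproduces the reduction of the infinite game to a sequence of scaled finite games, but the finite-game winning strategy --- the part of the proof where the actual work happens --- is missing.
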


The proof of this lemma will be given in the next two sections. In the rest of this section we explain how the statement of the lemma implies Theorem~\ref{th:main}. This is a standard argument useg in all the game proofs. Consider an ignorant Adversary who does not even look on our (Mathematician's) moves, and just enumerates from below (lower semicomputes) the values of the continuous a priori probability $\A(x)$. (They are lower semicomputable; some additional care is needed to ensure that $a(x)\ge a(x0)+a(x1)$ is true not only for the limit values, but for approximations at every step, but this is done in a standard way, we can increase $a(x)$ going from the leaves to the root.) 

The actions of \textbf{A} are computable. Let \textbf{M} uses her computable winning strategy against such an adversary. Then \textbf{M}'s behavior is computable, too. So the limit values of $m(x)$ form a lower semicomputable function, and the winning condition guarantees that $\sum_{s\sqsubset \omega} m(x)/\A(x)$ is infinite for some sequence $\omega$. It remains to note that the discrete a priori probability $\m(x)$ is an upper bound (up to $O(1)$-factor) for every lower semicomputable function $m(x)$.

\section{Finite games are enough}\label{sec:finite-use}

To construct the winning strategy for \textbf{M} in the infinite game described in the previous section, we combine winning strategies for finite games of similar nature. A finite game is determined by two parameters $N$ and $k$; the value of $N$ is the height of the finite full binary tree on which the game is played, and $k$ is the value of the sum that \textbf{M} should achieve to win the game. Here $N$ is a positive integer, and $k\ge 1$ is a rational number.

Initially all vertices (=all strings of length at most $N$) have zero $a$- and $m$-weights, except for the root that has unit $a$-weight: $a(\Lambda)=1$. The players alternate; at every move each player may increase her weights (rational numbers), but both players should obey the restrictions: the sum of $m$-weights should not exceed $1$; for every $x$ that is not a leaf the inequality $a(x)\ge a(x0)+a(x1)$ should be true; the value of $a(\Lambda)$ remains equal to $1$.  The position of a game is \emph{winning} for \textbf{M} if there exists a leaf $w$ such that the sum $\sum_{x\sqsubset w} m(x)/a(x)$ is at least $k$. Otherwise the position is winning for \textbf{A}. Each player, making a move, should create a winning position (for her), otherwise she loses the game. (She may also lose the game by violating the restrictions for her moves.)

\begin{lemma}\label{lem:finite}
For every positive rational $k$ there exists some $N$ and a winning strategy for \textbf{M} that guarantees that \textbf{M} wins after a bounded number of steps. (The bound  depends on $k$, but not on \textbf{A}'s moves.) The value of $N$ and the strategy are computable given~$k$.
\end{lemma}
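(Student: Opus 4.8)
The plan is to build \textbf{M}'s winning strategy by induction on the target $k$ (equivalently, by increasing the tree height $N$), reducing at each step to a game on a subtree of height one less. First I would record the two scaling facts that make such a reduction possible. If \textbf{M} confines her play to the subtree rooted at a child $c$ of the root, and \textbf{A} has committed incoming flow $a(c)=\phi$ to that subtree, then after the substitution $\tilde a(y)=a(cy)/\phi$ the subtree becomes an ordinary instance of the game with unit incoming flow, while the quantity \textbf{M} accumulates along a branch gets multiplied by $1/\phi$. Moreover the accumulated sum is linear in \textbf{M}'s spent budget. Thus a height-$(N-1)$ configuration that is worth $V$ at unit flow and unit budget is worth $(b/\phi)V$ when played with budget $b$ into a subtree carrying flow $\phi$; the amplification factor $1/\phi$ is the entire source of \textbf{M}'s profit.

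The decisive point --- and the reason a naive analysis fails --- is that the static min--max value of this game is only $1$: for any fixed weights $m$ with $\sum_x m(x)\le 1$ the adversary can choose a semimeasure $a$ (take $a$ proportional to \textbf{M}'s weights, pushed down to the leaves) that keeps every branch sum at most $1$. What breaks this equilibrium is that \textbf{A}'s weights may only increase. The plan is therefore a feint-and-strike strategy exploiting this monotonicity: \textbf{M} first develops a threat in one subtree, forcing \textbf{A} to pour flow $\alpha$ into it in order to hold the branch sum below $k$; since $a(0)+a(1)\le 1$ this permanently starves the sibling subtree, whose incoming flow can never again exceed $1-\alpha$; \textbf{M} then strikes there, where by the scaling identity her amplification is at least $1/(1-\alpha)$. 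A direct two-level instance of exactly this maneuver already forces a branch sum of $2$ while \textbf{A} is left helpless, which shows the dynamic value strictly exceeds the static value $1$.

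Iterating this across depth is where the real work lies, and is the step I expect to be the main obstacle. Rather than committing all of \textbf{A}'s flow at one level, \textbf{M} should force \textbf{A} to commit flow gradually as the threats move deeper, and one must optimize the schedule: how much branch sum to extract in a subtree carrying flow $\phi$ before forcing $\phi$ upward and recursing into the freshly starved sibling. In the continuous limit this optimization becomes the integral the introduction alludes to; carrying it out should yield a strict per-level gain and hence a lower bound on the value $f(N)$ of the height-$N$ game that diverges as $N\to\infty$. Granting such a divergent lower bound, for each rational $k$ one simply takes $N$ large enough that $f(N)\ge k$, and the induction delivers the strategy.

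Finally I would check the computability and termination clauses. All weights are rational and the induction is finite, so the strategy is computable uniformly in $k$ (and $N$ is computed from $k$). For the bound on the number of moves, note that \textbf{A}'s committed flow is confined to $[0,1]$ and \textbf{M} acts in increments of a fixed rational granularity determined by $k$ and $N$; since every \textbf{M}-move either advances a branch sum toward $k$ or forces \textbf{A}'s total committed flow strictly upward by at least this granularity, and both quantities are bounded, the game must terminate within a number of steps depending only on $k$, as required.
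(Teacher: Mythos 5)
Your framing is sound and matches the paper's: the static value of the game is $1$ (the adversary plays proportionally, as in the proof of Proposition~\ref{prop:offline}), the only exploitable feature is that \textbf{A}'s flows cannot decrease, and the tool for combining subgames is the scaling fact (budget $b$ into a subtree with flow $\phi$ multiplies an achievable sum by $b/\phi$). But the proposal breaks down exactly where you flag the ``real work,'' and the one concrete claim you do make is false. A single feint-and-strike does \emph{not} force a branch sum of $2$: if \textbf{M} spends budget $w$ on the threat, \textbf{A} concedes flow just above $w/k$ (holding that branch below $k$), and the strike into the starved sibling then yields at most $(1-w)/(1-w/k) = k(1-w)/(k-w)$, which is \emph{less} than $k$ for every $k>1$ and every $w$. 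So the naive threat-then-strike maneuver gains nothing in the limit; the paper's toy example squeezes out only a tiny surplus (a winning strategy for $k=17/16$) and needs a three-case analysis, with weights at both $0$ and $00$, to do even that.

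The divergent lower bound that you ``grant'' is precisely the content of the lemma, and the mechanism that produces it is absent from your sketch. In the paper's inductive step from $k$ to $k+\varepsilon$, \textbf{M} places a permanent weight $\varepsilon$ at vertex $0$, then plays $n$ \emph{sequential} scaled $k$-games in subtrees $z_1,\ldots,z_n$ of the left subtree, keeping the right subtree as a standing threat with thresholds $d_i$ determined by equation~$(*)$. The profit comes from \emph{reuse}: the single weight $\varepsilon$ at vertex $0$ contributes $\varepsilon/d_i \ge \varepsilon$ to every one of the $n$ games, and summing the flows $a_i$ that \textbf{A} is forced to spend gives a Riemann sum converging to $\bigl(k(1-\varepsilon)+\varepsilon\log(1/\varepsilon)\bigr)/(k+\varepsilon)$, whose $\varepsilon\log(1/\varepsilon)$ term is what pushes the total above $1$ and yields the contradiction. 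Your schedule --- extract some sum in a subtree, push the flow up, ``recurse into the freshly starved sibling'' --- has no analogue of this reuse: each recursion consumes budget while the starvation factor it buys is bounded by the computation above, so there is no reason the per-step gains do not decay summably. Without either this mechanism or a worked substitute for it (including the choice of $\varepsilon$ as a positive continuous function $f(k)$, which is what makes the iteration $k_{i+1}\ge k_i+f(k_i)$ diverge), the proposal does not prove the lemma.
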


\begin{wrapfigure}{r}{0pt}
   \includegraphics{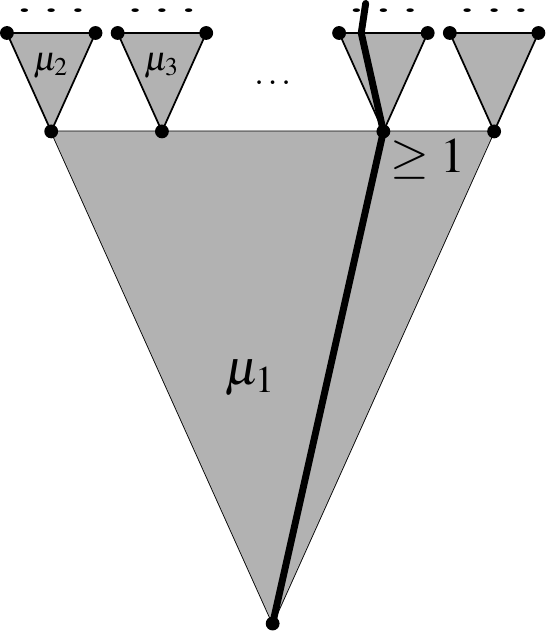}
\caption{Finite subtrees chosen inside an infinite binary tree. On the subtrees \textbf{M} applies a winning strategy for a finite game using quotas $\mu_1,\mu_2,\ldots$, and achieves sum $1$ in every subtree. }\label{subtrees}
\end{wrapfigure}

The proof of this lemma will be given in the next section. In the rest of this section we show how we can use winning strategies for finite games to win the infinite game of the previous section (and therefore to finish the proof of our main result, Theorem~\ref{th:main}).
Let us make first several simple remarks.

First, note that if \textbf{M} has a winning strategy for some $N$, she has also a winning strategy for all larger $N$ (just ignore the vertices that have height greater than $N$). So the words ``there exists some $N$'' can be replaced by ``for every sufficiently large $N$''.

Second, one can scale the game, bounding the total \textbf{M}-weights by some quota $M$ (instead of $1$) and letting $a(\Lambda)$ be some $A$ (also instead of $1$). Then, if \textbf{M} was able to achieve the sum $k$ in the original game, she can use essentially the same strategy in the new game to achieve $kM/A$. For that she should imagine that the actual moves of \textbf{A} are divided by $A$, and multiply by $M$ the moves recommended by the strategy.

Since $k$ in Lemma~\ref{lem:finite} is arbitrary, \textbf{M} can achieve arbitrary large sum even if her weights are limited by arbitrary small constant $\mu>0$ (known in advance); the size $N$ of the tree then depends both on the sum we want to achieve, and on the allowed quota $\mu$. This simple remark allows \textbf{M} to run in parallel several strategies on some subtrees, allocating quotas $\mu_1,\mu_2,\mu_3,\ldots$ to them, where $\sum \mu_i \le 1$ is some converging series, e.g., $\mu_i=2^{-i}$. These strategies achieve sum $1$ in each subtree. It is indeed possible: the flow generated by the adversary can be considered separately on each subtree: if the total flow starting from the root is at most $1$, the flow in every vertex, including the root of a subtree, is also at most $1$. (Note the using $a(\Lambda)<1$ in the root instead of $1$ makes the task of adversary harder, so \textbf{M} can win in every subtree.) These subtrees are chosen as shown on Fig.~\ref{subtrees}. 

Knowing $\mu_1$, we choose the height of the first subtree; knowing the number of leaves in the first subtree and the corresponding $\mu_i$, we choose the appropriate heights for the second layer subtrees (one can choose the same height for all of them to make the picture nicer); then, knowing the number of leaves in all of then, we look at the corresponding $\mu_i$ and select the height for the third layer, etc. The games are played (and won) independently in each subtree. In each subtree there is a path with $\sum m(x)/a(x)\ge 1$, and we can combine these paths into an infinite path starting from the root. 

\section{How to win the finite game}\label{sec:finite-win}

In this section we provide the proof of Lemma~\ref{lem:finite}, therefore finishing the proof of our main result, Theorem~\ref{th:main}. As we have seen, the winning strategy for Mathematician should rely on the on-line nature of the game: if \textbf{M} makes only one move and then stops, Adversary could win by splitting the flow proportional to the weights of the subtrees (see the proof of Proposition~\ref{prop:offline}).

\begin{figure}[h]
\begin{center}
   \includegraphics{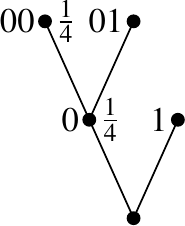}
\end{center}
\caption{First move of \textbf{M} and the reaction of \textbf{A}.}\label{game5}
\end{figure}
To construct the winning strategy for \textbf{M} in the finite game, first let us start with a toy example and show how she can make the sum $\sum_{x\sqsubset w} m(x)/a(x)$ greater than $1$. For this, tree of height $2$ is enough (in fact only some part of it is needed). 

\bigskip

\textbf{M} starts by putting weights $\frac{1}{4}$ to vertices $0$ and $00$ (Figure~\ref{game5}). Then \textbf{A} has to decide how much flow she wants to send to $0$ and $00$. There are several possibilities:

\begin{itemize}
\item The flow to $0$ is small: $a(0)<\frac{1}{2}$. In this case $a(00)$ is obviously also less than $\frac{1}{2}$, so $$\frac{m(\Lambda)}{a(\Lambda)}+\frac{m(0)}{a(0)}+\frac{m(00)}{a(00)}>0+\frac{1}{2}+\frac{1}{2}>1,$$
and this move does not create a winning position for \textbf{A}.

\begin{figure}[h]
\begin{center}
\includegraphics{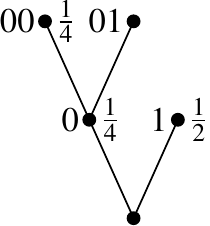}
\end{center}
\caption{A winning move of \textbf{M} in the second case.}\label{game6}
\end{figure}

\item The flow to $0$ is big: $a(0)>\frac{1}{2}$. In this case \textbf{A} may get a winning position (for now).  However, \textbf{M} still can win. Indeed, $a(1)\leq 1-a(0)$ is less than $\frac{1}{2}$ and remains less than $1/2$ forever. Then \textbf{M} puts weight $\frac{1}{2}$ to vertex $1$ (Figure~\ref{game6}), making the sum there greater than $1$, and \textbf{A} cannot do anything.

\begin{figure}[h]
\begin{center}
\includegraphics{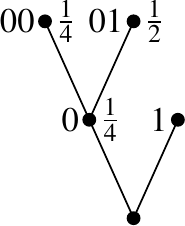}
\end{center}
\caption{A winning move of \textbf{M} in the third case.}\label{game7}
\end{figure}

\item The intermediate case: $a(0)=\frac{1}{2}$. In this case $a(00)$ should be also $\frac{1}{2}$, otherwise the sum in $00$ will still exceed $1$ and \textbf{A} does not get a winning position. But if $a(00)=a(0)=1/2$,  \textbf{M} can put weight $\frac{1}{2}$ to vertex $01$, and \textbf{A} cannot send more than $1/2$ to $01$ (since $1/2$ is already directed to $00$). Then, $$\frac{m(\Lambda)}{a(\Lambda)}+\frac{m(0)}{a(0)}+\frac{m(01)}{a(01)}\geq 0+\frac{1}{4}+\frac{1/2}{1/2}=\frac{5}{4}>1.$$
\end{itemize}

More careful analysis shows that using this idea \textbf{M} can get a winning strategy for $k=17/16$. But we need an arbitrary large $k$ anyway, so we do not go into details, and provide another construction.

The winning strategy for arbitrary $k$ will be recursive: we assume that \textbf{M} has winning strategy for some $k$ and then use this strategy to construct \textbf{M}'s winning strategy for some $k'=k+\varepsilon$, where $\varepsilon>0$. The increase $\varepsilon$ depends on $k$ and is rather small, but has a lower bound $f(k)$ which is a positive continuous function of $k$.

Iterating this construction, we get $k_i$-winning strategies where $k_1=1$ (for $k=1$ the winning strategy is trivial) and
$$ k_{i+1}\ge k_i + f(k_i).$$
We see now that $k_i\to\infty$ and $i\to\infty$; indeed, if $k_i\to K$ for some finite $K$, then $k_{i+1}\ge k_i+f(k_i)\to K+f(K)$, a contradiction.

\clearpage
\begin{wrapfigure}[18]{r}{0cm}
\vspace*{-5mm}
   \hspace*{1cm}\includegraphics{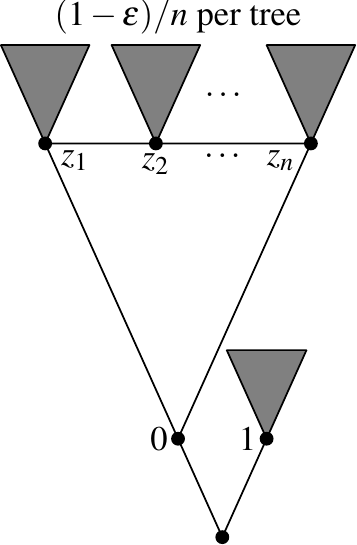}\hspace*{1cm}
\caption{The tree for the inductive $(k+\varepsilon)$-strategy.}\label{game8}
\end{wrapfigure}

To explain the idea of this construction, let us first comment on the toy example explained above (how to get sum greater than $1$). Making her first move, \textbf{M} keeps some reserve that can be later put into the vertex $1$. This possibility creates a constant threat for \textbf{A} that prevents her from directing too much flow to $0$. The same kind of threat will be used in the final construction; again vertex $1$ will be used as ``threat vertex''. If \textbf{A} directs to much flow to the left (vertex $0$), \textbf{M} sees this and uses all the reserve to win in the right subtree.

However, now the strategy is more complicated. There are two main improvements. First, instead of placing some weight in a vertex as before, \textbf{M} uses scaled $k$-strategy in the subtree rooted at that vertex, so the weight is used more efficiently (with factor $k$). This is done both in the left subtree and in the threat vertex $1$. (The subtrees where $k$-strategy is played, are shown in grey in Figure~\ref{game8}.) Second, in the left subtree (of sufficient height) \textbf{M} uses sequentially $n$ vertices $z_1,\ldots,z_n$ (and $n$ corresponding subtrees) for large enough $n$. (We will discuss later how $n$ is chosen.) 

Let us describe the $(k+\varepsilon)$-strategy in more details.

First of all, \textbf{M} puts weight $\varepsilon$ into vertex $0$ (after that \textbf{M} will never add weight there, so vertex $0$ always has weight~$\varepsilon$).\footnote{The weight of vertex $0$ in the strategy is equal to the desired increase in $k$; there are no deep reasons for this choice, but it simplifies the computations.} 

After that \textbf{M} still has weight $1-\varepsilon$ available. It is divided into $n$ equal parts, $(1-\varepsilon)/n$ each. These parts are used sequentially in subtrees with roots $z_1,\ldots,z_n$ (Figure~\ref{game8}). In these subtrees \textbf{M} uses scaled $k$-strategy; the coefficient is $(1-\varepsilon)/n$. In this way \textbf{M} forces \textbf{A} to direct a lot of flow to these $n$ subtrees or lose the $k$-game in one of this subtrees (and therefore lose $(k+\varepsilon)$-game in the entire tree, if the parameters are chosen correctly).

The threat vertex $1$ is used as follows: if at some point (after $i$ games for some $i$) the flow directed by \textbf{A} to $0$ is too large, \textbf{M} changes her strategy and use all the remaining weight, which is $(1-\varepsilon)(1-\frac{i}{n})$, for $k$-strategy in the $1$-subtree (and wins, if the parameters are chosen correctly).

Now we have to quantify the words ``a lot of flow'' and ``too large flow'' by choosing some thresholds. Assume that after $i$ games \textbf{A} directed some weight $d$ to $0$. Then she can use only $1-d$ for the game in the threat vertex. Using $k$-strategy with reserve $(1-\varepsilon)(1-\frac{i}{n})$, \textbf{M} can achieve sum (along some path in the right subtree)
   $$ \frac{k(1-\varepsilon)(1-\frac{i}{n})}{1-d}, $$
so the threshold $d_i$ is obtained from the equation
   $$ \frac{k(1-\varepsilon)(1-\frac{i}{n})}{1-d_i}=k+\varepsilon.\eqno(*)$$
If the flow to the left vertex $0$ is at least $d_i$, \textbf{M} stops playing games in the left subtree and wins the entire game by switching to $k$-strategy in the right subtree and using all remaining weight there.

What happens if \textbf{A} does not exceed the thresholds $d_i$? Then the vertex $0$ adds $\varepsilon/d_i$ to the sum in the $i$-th game, and to win the entire game \textbf{M} needs to get the sum $(k+\varepsilon)-\varepsilon/d_i$ in the $i$-th game. This can be achieved using (scaled) $k$-strategy with weight $(1-\varepsilon)/n$ unless \textbf{A} directs $a_i$ to $z_i$-subtree, where $a_i$ is determined by the equation
$$
k+\varepsilon-\frac{\varepsilon}{d_i}=k\frac{(1-\varepsilon)/n}{a_i}
\eqno(**)
$$
We need to prove, therefore, that for some $\varepsilon$ (depending on $k$) and for large enough $n$ the values $a_i$  determined by $(**)$, where $d_i$ is determined by $(*)$, satify the inequality  
$$ \sum_{i=1}^{n} a_i>1. $$
Then \textbf{A} is unable to direct $a_i$ in $z_i$-subtree for all $i$ and loses the game.

This sum can be rewritten as follows:
$$
\sum_{i=1}^n a_i=
\frac{1}{n}\sum_{i=1}^n\frac{kd_i(1-\varepsilon)}{d_i(k+\varepsilon)-\varepsilon} .
$$
Note that $d_i$ depends only on $k$, $\varepsilon$ and $u=\frac{i}{n}$, so this sum is the Riemann sum for the integral 
$$
\int_0^1 k(1-\varepsilon)\frac{d(u)}{(k+\varepsilon)d(u)-\varepsilon}du
$$ 
where 
$$
d(u)=1-\frac{k}{k+\varepsilon}(1-\varepsilon)(1-u).
$$ 
Note that we integrate a rational function of the form $(Au+B)/(Cu+D)$, so it is not a problem, and we get 
\begin{multline*}
\int_0^1 \frac{(1-\varepsilon)(ku- ku\varepsilon+k\varepsilon+\varepsilon)}{(k+\varepsilon)(u+\varepsilon-u\varepsilon)}du=\\=\frac{k(u+\varepsilon-\varepsilon u)+\varepsilon\log(u(\varepsilon-1)-\varepsilon)}{k+\varepsilon}\biggr|_0^1=\\=\frac{k(1-\varepsilon)+\varepsilon\cdot\log(1/\varepsilon)}{k+\varepsilon}.
\end{multline*} 
Note that for $\varepsilon\to 0$ this expression can be rewritten as 
$$
 1+\varepsilon\cdot(\log(1/\varepsilon)-k-1)/k+O(\varepsilon^2),
$$ 
so for sufficiently small $\varepsilon>0$ this integral will be greater than $1+\varepsilon$, and we can choose $n$ large enough to make the Riemann sum greater than $1$. It is easy to get a positive lower bound for $\varepsilon$ (depending on $k$) and find the corresponding $n$ effectively.

This finishes the proof of Lemma~\ref{lem:finite} and therefore the proof of our main result, Theorem~\ref{th:main}.

\textbf{Remark}. Let us repeat the crucial point of this argument: during the initial phase of the strategy, when \textbf{M}'s reserve is large, \textbf{A} cannot direct a lot of flow into $0$, so the weight $\varepsilon$ placed into this vertex is taken with a large coefficient. Without this threat \textbf{A} could place all the flow in the left subtree, and then the weight $\varepsilon$ would not help: on the contrary, the same weight could have been used $k$ times more efficiently in the subtrees, and we get no increase in $k$.

\section{Improvement: how to find a c. e. set with infinite sum}\label{sec:enumerable}

The construction can be adjusted to guarantee some additional properties of the sequence $\omega$ with $\sum_{x\sqsubset\omega}\m(x)/\A(x)=\infty$.

\begin{theorem}\label{th:enumerable}
There exists a computably enumerable set $X$ such that for its characterstic sequence $\omega_X$ (where $\omega_i=1$ for $i\in X$ and $\omega_i=0$ for $i\notin X$) the sum $\sum_{x\sqsubset\omega_X}\m(x)/\A(x)$ is infinite.
\end{theorem}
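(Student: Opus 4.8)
The plan is to adapt the game framework of Theorem~\ref{th:main} so that Mathematician's moves, when played against the ignorant Adversary who lower-semicomputes $\A$, can be realized as the characteristic sequence of a c.e. set. Recall that the winning path $\omega$ produced in the proof of Lemma~\ref{lem:infinite} is obtained by concatenating winning paths through a sequence of finite subtrees. The key observation is that we have freedom in \emph{how} we embed these subtrees into the infinite tree: instead of letting \textbf{M} place weight on arbitrary vertices, I would arrange that the branch $\omega$ she commits to is monotone in a sense compatible with c.e.-ness. A set $X$ is c.e. exactly when its characteristic sequence $\omega_X$ has the property that we can enumerate the positions where $\omega_i=1$; equivalently, we need a computable approximation $\omega_i^{(s)}$ that is nondecreasing in $s$ (bits can flip from $0$ to $1$ but never back).

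The obstacle is that in the original proof \textbf{M} chooses a path \emph{a posteriori}, combining per-subtree winning paths, and there is no reason these choices respect the monotonicity of a c.e. characteristic sequence. So the first step is to reexamine the finite-game strategy of Lemma~\ref{lem:finite} and note that within each subtree the winning path is selected by \textbf{M}; I would show this selection can be made so that passing to a deeper subtree only ever \emph{adds} $1$-bits relative to a fixed reference. Concretely, I would redesign the subtree layout so that the spine connecting successive subtrees always branches in a prescribed direction — say, the subtrees are stacked so that the connecting vertices are reached by appending $0$'s, and winning inside a subtree corresponds to turning finitely many future $0$-positions into $1$'s. Since the whole construction runs computably (\textbf{M}'s strategy and the subtree heights are computable, and \textbf{A}'s enumeration of $\A$ is computable), the resulting approximation to $\omega$ is computable and, by the monotonicity we impose, nondecreasing. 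Hence $\{i : \omega_i = 1\}$ is c.e.

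The key steps in order: (i) fix the reduction of Lemma~\ref{lem:infinite} to the finite games exactly as before, giving a computable \textbf{M}-strategy and a computable $\omega$; (ii) inspect the finite-game winning strategy and verify that \textbf{M}'s choice of winning leaf inside each subtree, together with the threat-vertex mechanism, can be organized so that the committed prefix only grows by setting bits to $1$ as the game in each subtree resolves; (iii) check that switching to the threat vertex (the case where \textbf{A} sends too much flow left) is also consistent with monotone bit-setting, since this too is a one-time computable decision per subtree; (iv) conclude that the limit sequence $\omega = \omega_X$ has a nondecreasing computable approximation, so $X$ is c.e., while the sum $\sum_{x\sqsubset\omega_X}\m(x)/\A(x)$ is still infinite because the game-winning condition is unchanged.

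The main difficulty I anticipate is step (ii)–(iii): making the path selection genuinely monotone without breaking the finite-game analysis. The threat vertex $1$ plays an essential role in Lemma~\ref{lem:finite}, and committing prematurely to bits along the spine could conflict with \textbf{M}'s need to later abandon the left subtree and win in the threat subtree. I would resolve this by deferring the commitment: \textbf{M} keeps the relevant bits at $0$ (the default for a c.e. set) until a subtree is definitively won, and only then records the finitely many $1$-bits of the winning leaf inside that subtree. Because each finite game terminates in a bounded number of steps (guaranteed by Lemma~\ref{lem:finite}), every such commitment happens at a finite stage, so no bit is ever retracted and the enumeration is legitimate. This deferral is exactly what reconciles the on-line game with the monotone-enumeration requirement, and verifying it carefully is the crux of the argument.
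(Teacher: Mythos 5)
Your overall goal (make the bit-commitments monotone so the limit branch is the characteristic sequence of a c.e.\ set) is the right one, and your sequential stacking of subtrees is in the spirit of what is needed. But the crux of your argument --- ``because each finite game terminates in a bounded number of steps, \textbf{M} can defer commitment until a subtree is \emph{definitively won}, and then the recorded winning leaf is never retracted'' --- has a genuine gap, and it is exactly the point the proof has to work hardest on. The bound in Lemma~\ref{lem:finite} bounds the number of \emph{game rounds}, not the time at which the game settles when played against the blind adversary who lower-semicomputes $\A$. That adversary never stops moving, and her game-theoretic ``responses'' are not aligned with time: at any finite stage \textbf{M} cannot distinguish ``\textbf{A} will never again create a winning position for herself'' from ``\textbf{A} will respond later.'' So your deferral faces a dilemma. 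If \textbf{M} waits for a recognizably definitive win, she may wait forever (the adversary may simply never push enough flow to force the next round); then the game above the first subtree never starts, and the sum along the branch stays bounded. If instead \textbf{M} commits as soon as her position is winning, a later move of \textbf{A} can force \textbf{M}'s own strategy to relocate the witness --- e.g.\ abandon the $z_i$-subtree for the $z_{i+1}$-subtree, or switch to the threat vertex $1$ --- so the leaf she committed to no longer carries sum $\ge 1$, and the committed $1$-bits cannot be taken back.

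What is actually needed, and what the paper does, is to accept that the witnessing leaf \emph{will} change at unpredictable times and to engineer all such changes to be coordinate-wise monotone ($0\to 1$ only). Concretely: (a) the finite game is strengthened so that \textbf{M} must at every moment exhibit a marked node with current sum $\ge 1$, and may change the mark only monotonically; this requires modifying the $(k+\varepsilon)$-strategy --- choose $z_1\le z_2\le\dots\le z_n$ coordinate-wise and, when moving from the game above $z_i$ to the game above $z_{i+1}$ (or to the threat vertex), append extra $1$s so as to cover all bits already set in the abandoned subtree; (b) the infinite combination cannot keep the finite games independent: each new game is started above the \emph{current} marked leaf of the previous one, and whenever a mark changes, every game started above the old mark is discarded and a fresh game is started higher up, with $1$s written into all intermediate positions. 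Since each game re-marks only finitely often, the nested games stabilize one after another, the branch converges, its approximation only turns $0$s into $1$s (hence the limit is the characteristic sequence of a c.e.\ set), and each stabilized subtree contributes at least $1$ to the sum. Your proposal contains neither the always-present monotone marking inside the finite game nor the discard-and-restart mechanism, and without them the construction cannot be carried out.
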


\begin{proof}
We start by modification of the finite game of Lemma~\ref{lem:finite}. Let us agree that \textbf{M} should (for each her move) not only achieve a winning position, but also explicitly mark one of the nodes of the tree where the sum is at least $1$ (according to the definition of the winning position). If there are several nodes where the  (current) sum reaches $1$, \textbf{M} can choose any of them. During the game, \textbf{M} can change the marked node, but monotonicity is required: the marked nodes should form an increasing sequence in a coordinate-wise ordering (for each node of a binary tree we consider a sequence of zeros and ones that leads to this node, and add infinitely many trailing zeros; in this way we get an infinite sequence, and when the node changes, the new sequence should be obtained from the previous one by some $0\to 1$ replacements). 

This requirement could be satisfied by minor changes in construction of winning strategy for $(k+\varepsilon)$ game. Note that the winning strategy for $(k+\varepsilon)$ game calls  the winning $k$-strategy for vertices $z_1,\ldots,z_n$, and maybe for the threat vertex. Using induction, we may assume that $k$-strategy satisfies the monotonicity requirement. This is not enough: \textbf{M} needs also to guarantee monotonicity while switching from the $k$-strategy in $z_i$-subtree to the $k$-strategy in $z_{i+1}$-subtree (or in the threat vertex). To achieve this, some precautions are needed. First of all, we choose $z_1,\ldots,z_n$ in such a way that $z_1\le z_2\le\ldots z_n$ coordinate-wise. Moreover, while playing the game above $z_1$, \textbf{M} makes some bits equal to $1$ (according to the winning $k$-strategy in the subtree). These bits cannot be reversed back, but this is not a problem: for example, one can add several $1$s at the end of $z_2$ (to cover all the bits changed while playing above $z_1$), and use a subtree rooted there, then do the similar trick for $z_3$, etc. (see Figure~\ref{monotone2}). Finally, the same can be done for the threat vertex.

\begin{figure}[h]
\begin{center}
   \includegraphics{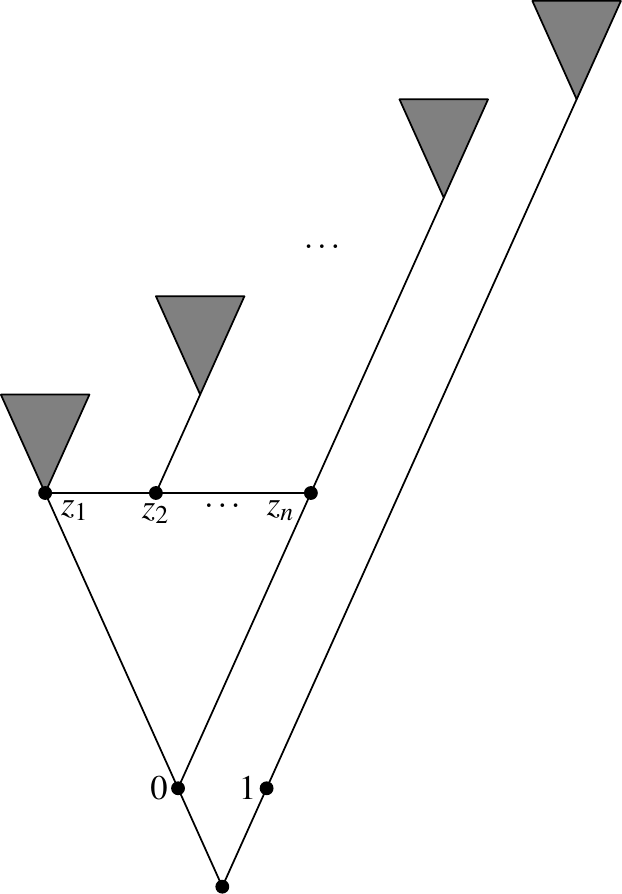}
 \end{center}
\caption{Special precautions needed to preserve the monotonicity during the induction step.}\label{monotone2}
\end{figure}

Infinite game of Lemma~\ref{lem:infinite} can also be adjusted. Here \textbf{M} should after each move maintain a \emph{current branch}, an infinite path in the binary tree that contains only finitely many ones (so it is essentially a finite object and can be specified by \textbf{M} explicitly).  The current branch may change during the game but in a monotone way: it should increase \emph{coordinate-wise}. In other words, if the previous branch went right at some level, the next one should do it too (at the same level). This monotonicity requirement guarantees that there exists a limit branch, and \textbf{M} wins the (infinite) game if the sum is infinite along this branch.

\begin{figure}[h]
\begin{center}
   \includegraphics{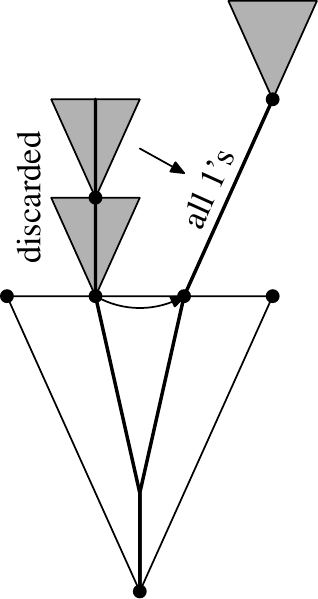}
 \end{center}
\caption{Subtrees where the games are started. When a marked leaf changes, all the subtrees above it are abandoned, and new subtree is chosen with all $1$s inbetween.}\label{mon_subtrees}
\end{figure}

We claim that \textbf{M} has a computable winning strategy in this game. Knowing this, we can easily construct an enumerable set required by Theorem~\ref{th:enumerable}. Again we use the computable winning strategy against a ``blind'' adversary that enumerates from below the values of the continuous a priori semimeasure. Then the behavior of the adversary is computable, the behavior of the computable winning strategy is also computable, and the limit branch will be a characteristic sequence of a (computably) enumerable set.

It remains to explain how one can combine winning strategies for finite games (modified) to get a winning strategy for the infinite game. We cannot run the strategies on subtrees in parallel as we did before, because the candidate branches provided by the strategies at the same level will not be related, and switching from one game to another will violate the ordering condition. Instead, we start first the game in the root subtree. The strategy makes some move, in particular, marks some leaf of this subtree (``current candidate''). Then we start the strategy on a subtree that is above this marked leaf. This strategy marks some leaf in this subtree, and we start a third game above it, etc. (See Figure~\ref{mon_subtrees}.)

At some point one of these strategies may change its marked leaf. Then all the games started above this (now discarded) leaf are useless, and we start a new game above the new marked leaf.  To satisfy the monotonicity condition, we should start the new game high enough and put $1$s in all positions below the starting point of the new game. This will guarantee that all $1$s that were already in the current branch will remain there. (We assume that at every moment only finitely many games are started, and the current branch has only finitely many ones.)

One can see that in the limit we still have a branch with infinite sum. Indeed, in the root game the current marked leaf can only increase in the coordinate-wise ordering, and only finitely many changes of marked leaf are possible. Therefore, some leaf will remain marked forever. The game started above this leaf will never be discarded, but the leaves marked in this game may change (monotonically).  This happens finitely many times, and after that the marked leaf remains the same, the game above it is never discarded, etc.

The monotonicity is guaranteed both for the elements inside the tree where the marked leaf changed (according to the monotonicity for finite games) and for outside elements (since we replace the bits in the discarded parts by $1$s only).
\end{proof}


\end{document}